\documentclass[]{article}
\usepackage{proceed2e,times}

\title{Combining predictions from linear models \\ when training and test
  inputs differ}

\author{ {\bf Thijs van Ommen} \\
  Centrum Wiskunde \& Informatica \\
  Amsterdam, The Netherlands
} %

\usepackage{amsmath}
\usepackage{amsfonts} %
\usepackage{latexsym}
\usepackage{graphicx}
\usepackage{amssymb} %
\usepackage[round]{natbib}
\DeclareRobustCommand{\VAN}[3]{#2}
\usepackage{etoolbox} %
\newtoggle{supplementary}
\toggletrue{supplementary}

\def\squarebox#1{\hbox to #1{\hfill\vbox to #1{\vfill}}}
\def\qed{\hspace*{\fill}
        \vbox{\hrule\hbox{\vrule\squarebox{.667em}\vrule}\hrule}}
\newenvironment{proof}[1][Proof]{\begin{trivlist}
    \item[\hskip \labelsep {\bfseries #1}]}{\qed \end{trivlist}}

\renewcommand{\eqref}[1]{(\ref{eq:#1})}

\newtheorem{theorem}{Theorem} %

\newtheorem{proposition}[theorem]{Proposition}

\newtheorem{assumption}{Assumption}

\DeclareMathOperator{\E}{\mathop{E}}

\DeclareMathOperator{\Cov}{Cov}
\newcommand{\trans}{^\top}
\newcommand{\best}{_o}
\newcommand{\besttheta}{{\theta\best}}
\newcommand{\AICc}{\text{AIC}_\text{C}} %
\newcommand{\XAICc}{\text{XAIC}_\text{C}}
\newcommand{\FAICc}{\text{FAIC}_\text{C}}

\newcommand{\RSS}{\mathrm{RSS}}
\DeclareMathOperator{\trace}{tr}

\newcommand{\R}{{\mathbf R}}

\begin{document}

\maketitle

\begin{abstract}
  Methods for combining predictions from different models in a
  supervised learning setting must somehow estimate/predict the
  quality of a model's predictions at unknown future inputs.
  Many of these methods (often implicitly) make the assumption that the
  test inputs are identical to the training inputs, which is seldom
  reasonable. By failing to take into account that prediction will
  generally be harder for test inputs that did not occur in the
  training set, this leads to the selection of too complex models.
  Based on a novel, unbiased expression for KL divergence, we propose
  XAIC and its special case FAIC as versions of AIC intended for
  prediction that use different degrees of knowledge of the test
  inputs. Both methods substantially differ from and may
  outperform all the known
  versions of AIC \emph{even when the training and test inputs are iid}, and
  are especially useful for deterministic inputs and under covariate
  shift.  Our experiments on linear models suggest that if the test
  and training inputs differ substantially, then XAIC and FAIC
  predictively outperform AIC, BIC and several other methods including
  Bayesian model averaging.

\end{abstract}

\section{INTRODUCTION}

In the statistical problem of model selection, we are given a set of
models $\{\, {\cal M}_i \mid i \in {\cal I} \,\}$, each of the form
${\cal M}_i = \{\, g_i(\cdot \mid \theta) \mid \theta \in
\Theta_i\,\}$, where the $g_i(\cdot \mid \theta)$ are density
functions on (sequences of) data. We wish to use one of these models
to explain our data and/or to make predictions of future data, but do
not know which model explains the data best. It is well known that
simply selecting the model containing the maximum likelihood
distribution from among all the models leads to overfitting, so any
expression of the quality of a model must somehow avoid this problem.
One way to do this is by estimating each model's ability to predict
\emph{unseen} data (this will be made precise below). This approach is
used by many methods for model selection,
including cross-validation, AIC \citep{Akaike} and its many variants,
\citeauthor{GelfandGhosh1998}'s $D_k$
\citeyearpar{GelfandGhosh1998}, and BPIC \citep{Ando2007}. However,
none of these methods takes into account that for supervised learning
problems, the generalization error being estimated will vary with the
test input variables. Instead, they implicitly assume that the test
inputs will be \emph{identical} to the training inputs.

In this paper, we derive an estimate of the generalization error that
does take the input data into account, and use this to define a new
model selection criterion XAIC, its special case FAIC, and the variants
$\XAICc$ and $\FAICc$ (small sample corrections). We use similar
assumptions as AIC, and thus our methods can be seen as relatives of
AIC that are adapted to supervised learning when the training and test
inputs differ. Our experiments show that our methods have excellent
predictive performance, better even than Bayesian model averaging in
some cases. Also, we show theoretically that AIC's unawareness of
input variables leads to a bias in the selected model order, even in
the seemingly safe case where the test inputs are drawn from the same
distribution as the training inputs. No existing model selection
method seems to address this issue adequately, making XAIC and FAIC
more than ``yet another version of AIC''. %

It is in fact quite surprising that, more than 40 years after its
original invention, all the forms of AIC currently in use are biased
in the above sense, and in theoretical analyses, conditional model
selection methods are often even compared on a new point $x$
constrained to be one of the $x$ values in the training data (see
e.g. \citet{Yang_shared}), even though in most practical
problems, a new point $x$ will \emph{not} be drawn from this empirical
training data distribution, but rather should be regarded as falling
in one of the three cases considered in this paper: (a) it is drawn
from the same distribution as the training data (but not necessarily
equal to one of the training inputs); (b) it is drawn from a
different distribution (covariate shift); (c) it is set to a fixed,
observable value, usually not in the training set, but the process
that gave rise to this value may not be known.

\subsection{GOALS OF MODEL SELECTION}\label{sec:goals}

When choosing among or combining predictions from different models,
one can have different goals in mind. Whereas BIC and BMS (Bayesian
model selection) focus on finding the most probable model, methods
like AIC, cross-validation %
and SRM
(structural risk minimization, \citet{Vapnik_SLT}) aim to find the
model that leads to the best {\em predictions\/} of future data.
While AIC and cross-validation typically lead to predictions that
converge faster to optimal in the sense of KL-divergence than those of
BIC and BMS, it is also well-known that, unlike BIC and BMS, such
methods are not statistically consistent (i.e. they do not find the
smallest submodel containing the truth with probability 1 as $n
\rightarrow \infty$); there is an inherent conflict between these two
goals, see for example \citet{Yang_simple,switchNIPS,switchJRSS}. Like AIC, the
XAIC and FAIC methods developed here aim for predictive optimality
rather than consistency, thus, if consistency is the main concern,
they should not be used.
We also stress at the outset that, unlike most other model selection
criteria, the model selected by FAIC may {\em depend\/} on the new $x$
whose corresponding $y$ value is to be predicted; for different $x$, a
different model may be selected based on the same training data. Since
--- as in many other model selection criteria --- our goal is
predictive accuracy rather than `finding the true model', and since
the dependence on the test $x$ helps us to get substantially better
predictions, we are not worried by this dependency.

FAIC thus cannot be said to select a `single' model for a given
training set --- it merely outputs a {\em function\/} from $x$ values
to models. As such, it is more comparable with BMA (Bayesian model
{\em averaging\/}) rather than BMS ({\em selection\/}). BMA is of
course a highly popular method for data prediction; like FAIC, it
adapts its predictions to the test input $x$ (as we will see, FAIC
tends to select a simpler model if there are not many training points
near $x$; BMA predicts with a larger variance if there are not many
training points near $x$). BMA leads to the optimal predictions in the
idealized setting where one takes expectation under the prior (i.e.,
in frequentist terms, we imagine nature to draw a model, and then a
distribution within the chosen model, both from the prior used in BMA,
and then data from the drawn distribution), and usually performs very
well in practice as well. %
It is of considerable interest then that our XAIC and FAIC
outperform Bayes by a fair margin in some of our
experiments in Section~\ref{sec:experiments}. %

\subsection{IN-SAMPLE AND EXTRA-SAMPLE ERROR} %

Many methods for model selection work by computing some estimate of
how well each model will do at predicting unseen data.
This generalization error may be defined in various ways, and methods
can further vary in the assumptions used to find an estimate. AIC
\citep{Akaike} is based on the expression for the generalization error
\begin{equation}\label{eq:AICtarget}
-2\E_{\mathbf U} \E_{\mathbf V}
\log g_i(\mathbf V \mid \hat\theta_i(\mathbf U)),
\end{equation}
for model ${\cal M}_i = \{\, g_i(\cdot \mid \theta) \mid \theta \in
\Theta_i\,\}$, 
where $\hat\theta_i(\mathbf U)$ denotes the element of $\Theta_i$
which maximizes the likelihood of data $\mathbf U$, and where both
random variables are independent samples of $n$ data points each, both
following the true distribution of the data. (We use capitals to
denote sequences of data points, and boldface for random variables.
Throughout this paper, $\log$ denotes the natural logarithm.)
Up to an additive term which is the same for all models, the inner
expectation is the KL divergence from the true distribution to
$g_i(\cdot \mid \hat\theta_i(\mathbf U))$. An interpretation of
\eqref{AICtarget} is that we first estimate the model's parameters
using a random sample $\mathbf U$, then judge the quality of this
estimate by looking at its performance on an independent, identically
distributed sample $\mathbf V$. AIC then works by estimating
\eqref{AICtarget} for each model %
by the asymptotically unbiased estimator
\begin{equation}\label{eq:AICearly}
  -2 \log g_i(\mathbf U \mid \hat\theta(\mathbf U)) + 2k,
\end{equation}
and selecting the model minimizing this estimate.
Thus AIC selects the model whose maximum likelihood
estimate is expected to be closest to the truth in terms of KL
divergence. In the sequel, we will consider only one model at a time,
and therefore omit the model index.

In supervised learning problems such as regression and classification,
the data points consist of two parts $u_i = (x_i, y_i)$, and the
models are sets of distributions on the \emph{output variable}
$\mathbf y$ conditional on the \emph{input variable} $x$ (which may or
may not be random). We call these \emph{conditional} models.
The
conditionality expresses that we are not interested in explaining the
behaviour of $x$, only that of $\mathbf y$ given $x$. Then \eqref{AICtarget}
can be adapted in two ways: as the \emph{extra-sample error}
\begin{equation}\label{eq:conditionalAICextrasample}
-2\E_{\mathbf Y \mid X} \E_{\mathbf Y' \mid X'}
\log g(\mathbf Y' \mid X', \hat\theta(X, \mathbf Y)),
\end{equation}
and, replacing both $X$ and $X'$ by a single variable $X$, as the
\emph{in-sample error}
\begin{equation}\label{eq:conditionalAICinsample}
-2\E_{\mathbf Y \mid X} \E_{\mathbf Y' \mid X}
\log g(\mathbf Y' \mid X, \hat\theta(X, \mathbf Y)),
\end{equation}
where capital letters again denote sequences of data points. %
Contrary
to \eqref{AICtarget}, these quantities capture that the expected
quality of a prediction regarding $\mathbf y$ may vary with $x$.

An example of a supervised learning setting is given by \emph{linear
  models}. In a linear model, an input variable $x$ is represented by
a \emph{design vector} and a sequence of $n$ inputs by an $n \times p$
\emph{design matrix}; with slight abuse of notation, we use $x$ and
$X$ to represent these. Then the densities $g(\mathbf Y \mid X, \mu)$ in the
model are Gaussian with mean $X \mu$ and covariance matrix $\sigma^2
I_n$ for some fixed $\sigma^2$. Because $g$ is of the form
$e^{-\text{squared error}}$, taking the negative logarithm as in
\eqref{AICtarget} produces an expression whose main component is a sum
of squared errors; the residual sum of squared errors $\RSS(\mathbf
Y)$ is the minimum for given data, which is attained by the maximum
likelihood estimator.  Alternatively, $\sigma^2$ may be another
parameter in addition to $\mu$ if the true variance is unknown.

It is standard to apply ordinary AIC to supervised learning
problems, %
for example for linear models with fixed variance where \eqref{AICearly} takes the well-known form
\begin{equation}\label{eq:AICRSS}
  \frac{1}{\sigma^2} \RSS(\mathbf Y) + 2k,
\end{equation}
where $k$ is the number of parameters in the model. But because the
standard expression behind AIC \eqref{AICtarget} makes no mention of
$X$ or $X'$, this corresponds to the tacit assumption that $X=X'$, so
that the in-sample error is being estimated.

However, the extra-sample error is more appropriate as a measure of
the expected performance on new data.
AIC was intended to correct the bias that results from evaluating an
estimator on the data from which it was derived, but because it uses
the in-sample error, AIC evaluates estimators on new output data, but
old input data. So we see that in supervised problems, a bias similar
to the one it was intended to correct is still present in AIC.

\subsection{CONTENTS}

The remainder of this article is structured as follows. In
Section~\ref{sec:main}, we develop our main results about the
extra-sample error and propose a new model selection criterion based
on this. It involves $\kappa_{X'}$, a term which can be calculated
explicitly for linear models; we concentrate on these models in the
remainder of the paper. Special cases of our criterion, including a
focused variant, are presented in Section~\ref{sec:nonfocused}. In
Section~\ref{sec:kappaBehaviour} we discuss the behaviour of our
estimate of the extra-sample error, and find that without our
modification, AIC's selected model orders are biased. Several
experiments on simulated data are described in
Section~\ref{sec:experiments}. Section~\ref{sec:discussion} contains
some further theoretical discussion regarding Bayesian prediction and
covariate shift. Finally, Section~\ref{sec:conclusion} concludes. All
proofs are in the \iftoggle{supplementary}{supplementary
  material.}{supplementary material.\footnote{Posted on arXiv}}

\section{ESTIMATING THE EXTRA-SAMPLE ERROR}\label{sec:main}

In this section, we will derive an estimate for the extra-sample
error. Our assumptions will be similar to those used in AIC to
estimate the in-sample error; therefore, we start with some
preliminaries about the setting of AIC.

\subsection{PRELIMINARIES}\label{sec:prelim}

In the setting of AIC, the data points are independent but not
necessarily identically distributed. The number of data points in
$\mathbf Y$ and $\mathbf Y'$ is $n$.
We define the Fisher information matrix $I(\theta)$ as
$-\E_{\mathbf Y'}\frac{\partial^2}{\partial\theta^2} \log g(\mathbf Y'
\mid \theta)$,
and define the conditional Fisher information matrix $I(\theta
\mid X')$ analogously. We write $\Cov(\hat\theta(X, \mathbf Y) \mid
X)$ for the conditional covariance matrix $\E_{\mathbf Y \mid X}
[\hat\theta(X, \mathbf Y) - \E_{\mathbf Y \mid X} \hat\theta(X, \mathbf Y)]
[\hat\theta(X, \mathbf Y) - \E_{\mathbf Y \mid X} \hat\theta(X, \mathbf Y)]\trans$.

Under standard regularity assumptions, there exists a unique parameter
value $\besttheta$ that minimizes the KL divergence from the true
distribution, and this is what $\hat\theta(\mathbf Y)$ converges to.
Under this and other (not very restrictive) regularity assumptions
\citep{Shibata89}, it can be shown that \citep{BurnhamAnderson}
\begin{equation}\label{eq:TIC}
  -2 \log g(\mathbf Y \mid \hat\theta(\mathbf Y))
  + 2\widehat{\trace}\left\{I(\besttheta) \Cov(\hat\theta(\mathbf Y))\right\}
\end{equation}
(where $\widehat{\trace}$ represents an appropriate estimator of that
trace) is an asymptotically unbiased estimator of \eqref{AICtarget}.
The model selection criterion TIC (Takeuchi's information criterion)
selects the model which minimizes \eqref{TIC}.

The estimator of the trace term that TIC
requires has a large variance, making it somewhat unreliable in
practice. AIC uses the very simple estimate $2k$ for TIC's trace
term. This estimate is generally biased except when the true
data-generating distribution is in the model, but obviously has $0$
variance. %
Also, if some models
are more misspecified than others, those models will have a worse
log-likelihood. This term in AIC grows linearly in the sample size, so
that asymptotically, those models will be disqualified by AIC. Thus
AIC selects good models even when its penalty term is biased due to
misspecification of the models.

This approach corresponds to making the following assumption in the
derivation leading to AIC's penalty term:
\begin{assumption}\label{ass:AIC}
  The model contains the true data-generating distribution.
\end{assumption}
It follows that $\theta\best$ specifies this distribution. We
emphasize that this assumption is only required for AIC's derivation
and does not mean that AIC necessarily works badly if applied to
misspecified models. Under this assumption, the two matrices in
\eqref{TIC} cancel, so the objective function becomes \eqref{AICearly}, the standard
formula for AIC \citep{BurnhamAnderson}.

We now move to supervised learning problems, where the true
distribution of the data and the distributions $g$ in the models are
conditional distributions of output values given input values. In this
setting, the data are essentially iid in the sense that $g(\mathbf Y
\mid X, \theta) = \prod_{i=1}^n g(\mathbf y_i \mid x_i, \theta)$.
That is, the outputs are independent given the inputs, and if two
input variables are equal, the corresponding output variables are
identically distributed. Also, the definition of $\besttheta$ would
need to be modified to depend on the training inputs, but since
Assumption~\ref{ass:AIC} now implies that $g(\mathbf y \mid x,
\besttheta)$ defines the true distribution of $\mathbf y$ given $x$
for all $x$, we can take this as the definition of $\besttheta$ for
supervised learning when Assumption~\ref{ass:AIC} holds.

For supervised learning problems, AIC and TIC silently assume that
$X'$ either equals $X$ or will be drawn from its empirical
distribution. We want to remove this assumption.

\subsection{MAIN RESULTS}

We will need another assumption:
\begin{assumption}\label{ass:FAIC}
  For training data $(X, \mathbf Y)$ and (unobserved) test data
  $(X', \mathbf Y')$,
  \begin{multline*}
    -\frac{1}{n}\E_{\mathbf Y \mid X} \log g(\mathbf Y \mid X, \theta\best) \\
  = -\frac{1}{n'}\E_{\mathbf Y' \mid X'} \log g(\mathbf Y' \mid X', \theta\best),
  \end{multline*}
  where $n$ and $n'$ denote the number of data points in $X$ and $X'$,
  respectively.
\end{assumption}
This assumption ensures that the log-likelihood on the test data can
be estimated from the training data. If $\mathbf X$ and $\mathbf X'$
are random and mutually iid, this is automatically satisfied when the
expectations are taken over these inputs as well. %
While this assumption of randomness is
standard in machine learning, there are other situations
where $X$ and $X'$ are not random and Assumption~\ref{ass:FAIC} holds
nevertheless.
For instance, this is the case if
$g(\mathbf y \mid x, \theta)$ is such that $\mathbf y_i =
f_\theta(x_i) + \mathbf z_i$, where the noise terms $\mathbf z_i$ are
zero-mean and iid (their distribution may depend on $\theta$). This
additive noise assumption is common in regression-like settings. Then
Assumption~\ref{ass:AIC} implies that Assumption~\ref{ass:FAIC} holds
for all $X, X'$. %

To get an estimator of the extra-sample error
\eqref{conditionalAICextrasample}, we do not make any assumptions
about the process generating $X$ and $X'$ but leave the variables
free. We allow $n \neq n'$.

\begin{theorem}\label{thm:FAIC}
  Under Assumptions~\ref{ass:AIC} and \ref{ass:FAIC} and some standard
  regularity conditions (detailed in the supplementary material), and
  for $n'$ either constant or growing with $n$,
  \begin{multline}\label{eq:FAICthm}
    -2\frac{n}{n'}\E_{\mathbf Y \mid X} \E_{\mathbf Y' \mid X'}
    \log g(\mathbf Y' \mid X', \hat\theta(X, \mathbf Y))\\
    = -2\E_{\mathbf Y \mid X} \log g(\mathbf Y \mid X,
    \hat\theta(X, \mathbf Y))
    + k + \kappa_{X'}
    + o(1),
  \end{multline}
  where
    $\kappa_{X'} = \frac{n}{n'}\trace\left\{I(\besttheta \mid X')
      \Cov(\hat\theta(X, \mathbf Y) \mid X)\right\}.$

  Moreover, if the true conditional distribution of
  $\mathbf Y$ given $X$ is Gaussian with fixed variance and the
  conditional distributions in the models are also Gaussian with that
  same variance (as is the case in linear models with known variance), then
  the above approximation becomes exact.
\end{theorem}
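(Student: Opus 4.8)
The plan is to imitate the classical Taylor-expansion derivation of TIC from \eqref{TIC}, but to carry it out \emph{separately} for the test and training log-likelihoods and then splice the two expansions together using Assumption~\ref{ass:FAIC}. Write $\ell(\theta) = \log g(\mathbf Y \mid X, \theta)$ and $\ell'(\theta) = \log g(\mathbf Y' \mid X', \theta)$, and recall that $\hat\theta = \hat\theta(X, \mathbf Y)$ depends on the training data only, hence is independent of $\mathbf Y'$. Under Assumption~\ref{ass:AIC}, $\besttheta$ is the true parameter \emph{for every input}, so the score has mean zero at $\besttheta$ both on the training and on the test inputs, and the expected negative Hessians are the conditional Fisher informations $I(\besttheta \mid X)$ and $I(\besttheta \mid X')$ by definition.

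First I would expand the test log-likelihood. A second-order Taylor expansion of $\ell'(\hat\theta)$ about $\besttheta$ followed by $\E_{\mathbf Y' \mid X'}$ annihilates the linear term (mean-zero score) and, since $\hat\theta$ factors out of the $\mathbf Y'$-expectation, converts the quadratic term into $-\tfrac12 (\hat\theta - \besttheta)\trans I(\besttheta \mid X')(\hat\theta - \besttheta)$. Taking $\E_{\mathbf Y \mid X}$, applying the trace identity, and using the asymptotic unbiasedness $\E_{\mathbf Y \mid X}\hat\theta \approx \besttheta$ to replace the mean-squared-error matrix by $\Cov(\hat\theta \mid X)$ gives
\[
\E_{\mathbf Y \mid X}\E_{\mathbf Y' \mid X'}\ell'(\hat\theta) = \E_{\mathbf Y' \mid X'}\ell'(\besttheta) - \tfrac12 \trace\{I(\besttheta \mid X')\Cov(\hat\theta \mid X)\} + R',
\]
with $R'$ the third- and higher-order remainder. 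Multiplying by $n/n'$ and invoking Assumption~\ref{ass:FAIC}, which turns $\tfrac{n}{n'}\E_{\mathbf Y' \mid X'}\ell'(\besttheta)$ into $\E_{\mathbf Y \mid X}\ell(\besttheta)$, produces exactly the $\kappa_{X'}$ term together with the scaled remainder $\tfrac{n}{n'}R'$.

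Second I would expand the training log-likelihood, now exploiting that $\hat\theta$ is the maximiser, so $\nabla\ell(\hat\theta) = 0$. Expanding $\ell(\besttheta)$ about $\hat\theta$ and replacing the observed information by $I(\besttheta \mid X)$ (consistency of $\hat\theta$ plus the conditional law of large numbers) yields
\[
\E_{\mathbf Y \mid X}\ell(\hat\theta) = \E_{\mathbf Y \mid X}\ell(\besttheta) + \tfrac12 \trace\{I(\besttheta \mid X)\Cov(\hat\theta \mid X)\} + \E_{\mathbf Y \mid X} R,
\]
and the correctly-specified sandwich identity $\Cov(\hat\theta \mid X) \approx I(\besttheta \mid X)^{-1}$ collapses the trace to $\trace\{I(\besttheta \mid X)I(\besttheta \mid X)^{-1}\} = k$. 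Substituting $\E_{\mathbf Y \mid X}\ell(\besttheta) = \E_{\mathbf Y \mid X}\ell(\hat\theta) - \tfrac{k}{2} + o(1)$ into the previous display and multiplying through by $-2$ yields \eqref{FAICthm}.

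The main obstacle is the bookkeeping of remainder orders, and the delicate case is the test expansion: its third-order term scales like $n'\,\|\hat\theta - \besttheta\|^3 = O_p(n'\,n^{-3/2})$, which need not be small on its own, but it enters only through $\tfrac{n}{n'}R'$, so the $n'$ cancels and the contribution is $O(n^{-1/2}) = o(1)$ \emph{whether $n'$ is fixed or grows with $n$} — this is precisely why the stated regime for $n'$ suffices; the training remainder is handled by the usual AIC argument. For the final claim, note that in the Gaussian fixed-variance (linear) case $\ell$ and $\ell'$ are \emph{exactly} quadratic in the mean parameter, so the second-order expansions carry no remainder ($R = R' = 0$), the Hessian is deterministic (observed equals expected information), and $\Cov(\hat\theta \mid X) = \sigma^2(X\trans X)^{-1} = I(\besttheta \mid X)^{-1}$ holds exactly; hence every approximation above becomes an equality and the $o(1)$ term vanishes identically.
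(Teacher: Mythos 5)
Your proposal is correct and follows essentially the same route as the paper's proof: a second-order Taylor expansion of the test log-likelihood about $\besttheta$ (linear term killed by Assumption~\ref{ass:AIC}, quadratic term yielding the trace $\kappa_{X'}$ via independence of $\hat\theta$ from $\mathbf Y'$), then Assumption~\ref{ass:FAIC} to convert $\tfrac{n}{n'}\E_{\mathbf Y' \mid X'}\log g(\mathbf Y' \mid X', \besttheta)$ into its training-sample counterpart, then the standard AIC training-side expansion collapsing its trace to $k$ under correct specification, with the same treatment of the Gaussian fixed-variance case (exact quadraticity makes all approximations identities). The only cosmetic differences are that you spell out the training-side expansion where the paper simply invokes the standard AIC derivation, and your remainder bookkeeping tracks the $n'$ scaling of the test-side third-order term explicitly, whereas the paper argues more tersely via $\E_{\mathbf Y \mid X}\, n\lVert\hat\theta - \besttheta\rVert^2$ converging and the $1/n'$ prefactor; both arguments use the $n/n'$ factor and the stated regime for $n'$ in the same way.
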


We wish to use \eqref{FAICthm} as a basis for model selection. To do
this, first note that \eqref{FAICthm} can be estimated from our training
data using
\begin{equation}\label{eq:FAICempirical}
  -2 \log g(\mathbf Y \mid X, \hat\theta(X, \mathbf Y)) + k + \kappa_{X'}.
\end{equation}
\emph{Theorem~\ref{thm:FAIC} expresses that this is an asymptotically
  unbiased estimator of the extra-sample error.} We see that the
difference with standard AIC \eqref{AICearly} is that the penalty $2k$
has been replaced by $k + \kappa_{X'}$. We propose to use
\eqref{FAICempirical} as the basis for a new model selection criterion
\emph{extra-sample AIC (XAIC)}, which chooses the model that minimizes
an estimator of \eqref{FAICempirical}. What remains for this is to
evaluate $\kappa_{X'}$, which may depend on the unknown true
distribution, and on the test set through $X'$.

\subsection{THE $\kappa_{X'}$ AND $o(1)$ TERMS FOR LINEAR MODELS}\label{sec:xaiclinmod}

If the densities $g$ are Gaussian, then $\kappa_{X'}$ does not depend
on the unknown $\theta\best$ because the Fisher information is
constant, so no additional estimation is necessary to evaluate
it. Thus for a linear model with fixed variance, $\kappa_{X'}$ becomes
\begin{multline*}
\kappa_{X'}
= \frac{n}{n'}\trace\left\{\left[\frac{1}{\sigma^2} {X'}\trans X' \right]
  \left[\sigma^2(X\trans X)^{-1}\right]\right\}\\
= \frac{n}{n'}\trace\left[ {X'}\trans X' (X\trans X)^{-1} \right].
\end{multline*}
If the variance is also to be estimated, it can be easily seen that
$\kappa_{X'}$ will become this value plus one. In that case, the
approximation in Theorem~\ref{thm:FAIC} is not exact (as it is in the
known variance case), but the $o(1)$ term can be evaluated explicitly:

\begin{theorem}\label{thm:FAICc}
  For a linear model with unknown variance,
  \begin{multline*}
    -2\frac{n}{n'}\E_{\mathbf Y \mid X} \E_{\mathbf Y' \mid X'}
    \log g(\mathbf Y' \mid X', \hat\theta(X, \mathbf Y))\\
    = -2\E_{\mathbf Y \mid X} \log g(\mathbf Y \mid X, \hat\theta(X,
    \mathbf Y))\\
    + k + \kappa_{X'} +
    \frac{(k+\kappa_{X'})(k+1)}{n-k-1},
  \end{multline*}
  where $\kappa_{X'}$ can again be computed from the data and equals
  $(n/n')\trace({X'}\trans X'(X\trans X)^{-1}) + 1$, and $k$ is the
  number of parameters including $\sigma^2$.
\end{theorem}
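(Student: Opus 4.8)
The plan is to evaluate both sides of the identity exactly, using the explicit Gaussian form of the model together with the exact finite-sample distributions of the maximum likelihood estimators $\hat\mu = (X\trans X)^{-1} X\trans \mathbf Y$ and $\hat\sigma^2 = \RSS(\mathbf Y)/n$; the exactness of the statement (in contrast to the $o(1)$ in Theorem~\ref{thm:FAIC}) is precisely what forces us to compute moments exactly rather than asymptotically. Writing $-2\log g$ for a Gaussian density splits each log-likelihood into a log-variance part $n\log(2\pi\hat\sigma^2)$ and a scaled squared-error part. The first thing I would note is that the log-variance part occurs identically on both sides --- as $n\,\E_{\mathbf Y\mid X}[\log(2\pi\hat\sigma^2)]$ on the left (the factor $n/n'$ turns $n'\log(2\pi\hat\sigma^2)$ into $n\log(2\pi\hat\sigma^2)$) and inside the in-sample term $-2\E_{\mathbf Y\mid X}\log g(\mathbf Y\mid X,\hat\theta)$ on the right --- so it cancels and never has to be evaluated. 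This is convenient, since $\E[\log\hat\sigma^2]$ would otherwise drag in a digamma function.

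Next I would take the inner expectation of the squared-error part. Under Assumption~\ref{ass:AIC} the model is correct, so with $\besttheta = (\mu_o, \sigma_o^2)$ we have $\mathbf Y' = X'\mu_o + \mathbf z'$ with $\mathbf z'$ zero-mean, independent of the training data, and of variance $\sigma_o^2$; the cross term vanishes and $\E_{\mathbf Y'\mid X'}\|\mathbf Y' - X'\hat\mu\|^2 = \|X'(\mu_o - \hat\mu)\|^2 + n'\sigma_o^2$. Dividing by $\hat\sigma^2$, multiplying by $n/n'$, and taking $\E_{\mathbf Y\mid X}$ leaves a bias term and a noise term. The key structural fact here is that in the Gaussian linear model $\hat\mu$ and $\hat\sigma^2$ are independent, so each expectation factorizes; the bias term becomes $\E\|X'(\mu_o-\hat\mu)\|^2 \cdot \E[1/\hat\sigma^2]$, and $\E\|X'(\mu_o-\hat\mu)\|^2 = \sigma_o^2\,\trace({X'}\trans X'(X\trans X)^{-1})$ follows from $\Cov(\hat\mu\mid X) = \sigma_o^2 (X\trans X)^{-1}$ by the cyclic property of the trace.

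The decisive computation is $\E[1/\hat\sigma^2]$. Since $n\hat\sigma^2/\sigma_o^2 = \RSS(\mathbf Y)/\sigma_o^2$ is chi-squared on $n-k+1$ degrees of freedom (the number of regression coefficients being $k-1$), the inverse-chi-squared moment gives $\E[1/\hat\sigma^2] = n/(\sigma_o^2(n-k-1))$. This is exactly where the denominator $n-k-1$ of the small-sample correction is born, and it is the single place where the unknown-variance case genuinely differs from the known-variance case of Theorem~\ref{thm:FAIC}, in which $\sigma^2$ is a fixed constant rather than a random $\hat\sigma^2$. Substituting the two moments, the unknown $\sigma_o^2$ cancels throughout --- confirming that $\kappa_{X'}$ is free of it --- and, writing $\kappa_{X'} = (n/n')\trace({X'}\trans X'(X\trans X)^{-1}) + 1$, the left-hand side collapses to $n\,\E_{\mathbf Y\mid X}[\log(2\pi\hat\sigma^2)] + (n^2 - n + n\kappa_{X'})/(n-k-1)$.

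Finally I would compute the in-sample term using $\RSS(\mathbf Y)/\hat\sigma^2 = n$, giving $-2\E_{\mathbf Y\mid X}\log g(\mathbf Y\mid X,\hat\theta) = n\,\E_{\mathbf Y\mid X}[\log(2\pi\hat\sigma^2)] + n$, and then verify the identity by pure algebra: after cancelling the common log-variance term, it remains to check that $(n^2 - n + n\kappa_{X'})/(n-k-1)$ equals $n + k + \kappa_{X'} + (k+\kappa_{X'})(k+1)/(n-k-1)$. Clearing the denominator, both sides reduce to $n^2 - n + n\kappa_{X'}$, with every $k^2$, $k$, $\kappa_{X'}$, and $\kappa_{X'} k$ cross term cancelling. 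I expect the only real obstacle to be bookkeeping around the degrees of freedom: because $k$ counts $\sigma^2$, the residual chi-squared has $n-k+1$ rather than $n-k$ degrees of freedom, and an off-by-one here would corrupt the denominator of the correction. As a sanity check I would specialize to $X'=X$, $n'=n$, where $\trace({X'}\trans X'(X\trans X)^{-1}) = k-1$ gives $\kappa_{X'} = k$ and the penalty $k+\kappa_{X'} + (k+\kappa_{X'})(k+1)/(n-k-1)$ collapses to the classical $\AICc$ value $2k + 2k(k+1)/(n-k-1)$.
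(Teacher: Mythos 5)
Your proof is correct and follows essentially the same route as the paper's: split the Gaussian log-likelihood into a log-variance part and a squared-error part, use independence of $\hat\mu$ and $\hat\sigma^2$ together with $\Cov(\hat\mu \mid X) = \sigma^2\best (X\trans X)^{-1}$ and the inverse-moment identity $\E[1/\chi^2_{n-k+1}] = 1/(n-k-1)$, and finish by algebra. The only cosmetic difference is that the paper first reduces to a single test point ($n'=1$) and averages over the test set at the end, whereas you carry general $n'$ throughout; your closing $\AICc$ sanity check is a nice touch but not part of the paper's argument.
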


Theorem~\ref{thm:FAICc} presents an extra-sample analogue of the
well-known small sample correction $\AICc$ \citep{HurvTsai}, which is
derived similarly and uses a penalty of $2k +
2k(k+1)/(n-k-1)$. We define $\XAICc$ accordingly.
Though the theorem holds exactly only in the specific case described,
we believe 
that the extra penalty term %
will lead to better results in much more general settings in practice,
as is the case with $\AICc$ \citep{BurnhamAnderson}.

\section{MODEL SELECTION FOR EXTRA-SAMPLE PREDICTION}\label{sec:nonfocused} %

In this section, we discuss several concrete model selection methods,
all based on the XAIC formula \eqref{FAICempirical} and thus
correcting AIC's bias.

\subsection{NONFOCUSED VERSIONS OF XAIC}

Except in trivial cases, the extra-sample error
\eqref{conditionalAICextrasample} and its estimate
\eqref{FAICempirical} depend on the test inputs $X'$, so some
knowledge of $X'$ is required when choosing a model appropriate for
extra-sample prediction. In a semi-supervised learning setting where
$X'$ itself is known at the time of model selection, we could evaluate
\eqref{FAICempirical} directly for each model. However, $X'$ might not
yet be known when choosing a model.

If $\mathbf X'$ is not known but its distribution is,
we can replace $\kappa_{\mathbf X'}$ by its expectation;
for iid inputs, computing this reduces to computing $\E_{\mathbf x'}
I(\besttheta \mid \mathbf x')$. %

If the distribution of $\mathbf X'$ is also unknown, we need to
estimate it somehow.
If it is believed that $\mathbf X$ and $\mathbf X'$ follow the same
distribution, the empirical distribution of $\mathbf X$ could be used
as an estimate of the distribution of $\mathbf X'$. Then 
AIC is retrieved as a special case. Section~\ref{sec:kappaBehaviour} will show that
this is a bad choice even if $\mathbf X$ and $\mathbf X'$ follow the
same distribution, so a smoothed estimate is recommended instead.

Of course, we are not restricted to the case where $\mathbf X$ and
$\mathbf X'$ follow similar distributions. In the setting of covariate
shift \citep{SugiyamaCovShiftBook}, the distributions are different
but known (or can be estimated). This variant of XAIC is directly
applicable to that setting, yielding an unbiased analogue of AIC.

\subsection{FOCUSED MODEL SELECTION}

It turns out there is a way to apply \eqref{FAICempirical} even when
nothing is known about the process generating $X$ and $X'$. If our
goal is prediction, we can set $X'$ to the single point $x'$ for
which we need to predict the corresponding $\mathbf y'$. Contrary to
standard model selection approaches, we thus use $x'$ already at the
stage of model selection, rather than only inside the models. We
define the model selection criterion \emph{Focused AIC (FAIC)} as this
special case of XAIC, and $\FAICc$ as its small sample correction.

A focused model selection method implements the intuition that those
test points whose input is further away from the training inputs
should be predicted with more caution; that is, with less complex
models. As discussed in Section~\ref{sec:goals}, methods that optimize
predictive performance often are not consistent; this hurts in
particular for test inputs far away from the training inputs. We
expect that extra-sample adaptations of such methods (like XAIC) are
also inconsistent, but that using the focused special case helps to
guard against
this small chance of large loss.

Choosing a model specifically for the task at hand potentially lets us
end up with a model that performs this task much better than a model
found by a non-focused model selection method. However, there are
situations in which focus is not a desirable property: the mapping
from input values to predictions given by a focused model selection
method will be harder to interpret than that of a non-focused method,
as it is a combination of the models under consideration rather than a
single one of them. Thus, if the experimenter's goal is
interpretation/transparency, %
a focused model selection method is not recommended; these methods are
best applied when the goal is prediction. 

Evaluating the $x'$-dependent model selection criterion separately for
each $x'$ leads to a regression curve which in general will not be from
any one of the candidate models, but only piecewise so. It will
usually have discontinuities where it switches between models. If the
models contain only continuous functions and such discontinuities are
undesirable, Akaike weights \citep{Akaike1979, BurnhamAnderson} may be
used to get a continuous analogue of the FAIC regression curve.

\section{AIC VS XAIC ($k$ VS $\kappa_{x}$) IN LINEAR MODELS}\label{sec:kappaBehaviour}

Intuitively, the quantity $\kappa_x$ that appears as a penalty term in
the XAIC formula \eqref{FAICempirical} expresses a measure of
dissimilarity between the test input $x$ and the training inputs $X$.
This measure is determined fully by the models and does not have to be
chosen some other way. However, its properties are not readily
apparent. Because $\kappa_x$ can be computed exactly for linear
models, we investigate some of its properties in that case.

One useful characterization of $\kappa_x$ is the following: if we
express the design vector $x$ of the test point in a basis that is
orthonormal to the empirical measure of the training set $X$, then
$\kappa_x = \lVert x \rVert^2$.

For given $X$, $x$ may exist such that $\kappa_x$ is either greater or
smaller than the number of parameters $k$. An example of $\kappa_x <
k$ occurs for the
linear model consisting of all linear functions with known variance
(so $k=2$). Then $\kappa_x$ will be minimized when $x$ lies at the
mean of the input values in the training set, where $\kappa_x = 1$.

We will now consider the case where $\mathbf X$ and $\mathbf x$ are
random and iid. We showed that the XAIC expression
\eqref{FAICempirical} is an unbiased estimator of the extra-sample
error. AIC uses $k$ in place of $\kappa_{\mathbf x}$, and the above
suggests the possibility that maybe the instances where
$\kappa_{\mathbf x} > k$ and those where $\kappa_{\mathbf x} < k$
cancel each other out, so that AIC would also be approximately
unbiased as an estimate of the extra-sample error. However, the
following proposition shows that, except in a trivial case,
$\kappa_{\mathbf x}$ is on average greater than $k$. This means that
in those settings, AIC underestimates the model's extra-sample error.

(We should mention here that if $\mathbf X$ and $\mathbf x$ are random
and mutually iid, then as $n \to \infty$, AIC's bias goes to 0. The
bias we show below concerns all finite $n$; additionally, without
focus, an extreme $\mathbf x$ could result in a very biased AIC value
even for large $n$.)

\begin{proposition}\label{prop:bias}
  Consider a linear model ${\cal M}$ with training inputs $\mathbf X$
  and test input $\mathbf x$ iid such that $\mathbf X\trans \mathbf X$
  is almost surely invertible. Let ${\cal M}'$ be the submodel
  obtained by removing the final entry from every design vector. Then
  these models are related by $\E\kappa_{\mathbf x} \geq
  \E\kappa_{\mathbf x'} + 1$, with strict inequality if $\mathbf x$
  has at least two entries.
\end{proposition}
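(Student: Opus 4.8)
The plan is to reduce the statement to a single scalar inequality and then exploit the exchangeability of the $n+1$ iid points. First I would make the difference $\kappa_{\mathbf x} - \kappa_{\mathbf x'}$ explicit. Since for a single test point $\kappa_{\mathbf x} = n\,\mathbf x\trans(\mathbf X\trans\mathbf X)^{-1}\mathbf x$ (the unknown-variance case only adds a $+1$, which cancels in the difference), partition the training design matrix as $\mathbf X = [\,\mathbf Z \mid \mathbf w\,]$, with $\mathbf Z$ the first $p-1$ columns used by ${\cal M}'$ and $\mathbf w$ the last column, and write $\mathbf x = (\mathbf x', x_\ast)$. A standard block-inverse computation then gives
\[
\kappa_{\mathbf x} - \kappa_{\mathbf x'} = \frac{n\,r^2}{s}, \qquad r = x_\ast - \mathbf x'\trans\beta, \quad \beta = (\mathbf Z\trans\mathbf Z)^{-1}\mathbf Z\trans\mathbf w,
\]
where $s = \RSS$ is the training residual sum of squares of the least-squares fit $\beta$ of the last coordinate on the others; thus $r$ is the \emph{out-of-sample} residual of this auxiliary regression at the test point. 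So the claim reduces to $\E[\,n r^2/s\,] \ge 1$.

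Next I would symmetrize. Regard training and test points as $n+1$ iid design vectors indexed by $j\in\{0,\dots,n\}$, with $j=0$ the test point. For each $j$, let $r_j$ and $s_j$ be the out-of-sample residual and training RSS obtained by regressing the last coordinate on the first $p-1$ coordinates using the $n$ points other than $j$; by construction $(r_0,s_0)=(r,s)$. Since the points are exchangeable, $\E[r_j^2/s_j]$ does not depend on $j$, so
\[
\E\!\left[\frac{n\,r^2}{s}\right] = \frac{n}{n+1}\,\E\!\left[\sum_{j=0}^n \frac{r_j^2}{s_j}\right],
\]
and it suffices to prove $\sum_{j=0}^n r_j^2/s_j \ge (n+1)/n$ almost surely.

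The heart of the argument is this last inequality, and I expect that finding the right combination of estimates will be the main obstacle. Let $S=\sum_{j=0}^n \tilde e_j^2$ be the RSS of the full $(n+1)$-point regression, with residuals $\tilde e_j$. Two elementary least-squares facts drive everything: first, refitting after deleting point $j$ can only lower the RSS on the remaining points below that of the full fit, so $s_j \le S-\tilde e_j^2$ and hence $\sum_j s_j \le (n+1)S-S = nS$; second, the full fit minimizes the all-points RSS, so $s_j + r_j^2 \ge S$, giving $r_j^2/s_j \ge S/s_j - 1$. Summing the second bound and then applying the AM--HM inequality $\sum_j 1/s_j \ge (n+1)^2/\sum_j s_j$ together with the first bound yields $\sum_{j=0}^n r_j^2/s_j \ge S(n+1)^2/\sum_j s_j - (n+1) \ge (n+1)^2/n - (n+1) = (n+1)/n$, exactly as needed. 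For the strict inequality when $\mathbf x$ has at least two entries, note that ${\cal M}'$ then still has at least one predictor, so deleting a point a.s.\ moves the least-squares estimate; hence $s_j < S-\tilde e_j^2$ strictly, $\sum_j s_j < nS$, and the step bounded by $\sum_j s_j \le nS$ is strict almost surely, giving $\E\kappa_{\mathbf x} > \E\kappa_{\mathbf x'} + 1$. Throughout I would invoke the a.s.\ invertibility assumption (extended to the relevant leave-one-out subdesigns) to guarantee $s_j>0$ and unique fits, so no denominator vanishes.
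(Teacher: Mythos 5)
Your proof is correct in its main line and takes a genuinely different route from the paper's. The paper works in expectation throughout: it orthonormalizes the basis so that $\E_{\mathbf x}\mathbf x \mathbf x\trans = I_k$, rewrites $\E\kappa_{\mathbf x}$ as $\E\trace(\tfrac{1}{n}\mathbf X\trans\mathbf X)^{-1}$, splits off the last coordinate by a block (Schur-complement) inversion of the Gram matrix, and obtains the $+1$ from Jensen's inequality applied to $1/(\mathbf d-\mathbf v\trans\mathbf A^{-1}\mathbf v)$ together with the fact that $\E[\tfrac{1}{n}\mathbf X\trans\mathbf X]$ is the identity. You use the same Schur-complement identity but contracted against the test point rather than traced: this identifies the increment $\kappa_{\mathbf x}-\kappa_{\mathbf x'}=n r^2/s$ as a leave-one-out prediction residual of the auxiliary regression of the last coordinate on the others, after which exchangeability of the $n+1$ iid points plus a purely deterministic inequality (two least-squares optimality facts and AM--HM) finishes the job. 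What your route buys: the probabilistic input is reduced to bare exchangeability, no basis change or moment condition is needed (the paper's normalization $\E_{\mathbf x}\mathbf x\mathbf x\trans = I_k$ implicitly assumes finite second moments), and the argument exposes a cross-validation interpretation of AIC's bias. What the paper's route buys: it is shorter, and its equality analysis (e.g.\ the characterization that equality for a single-entry design vector forces $\mathbf x_1=\pm 1$ a.s.\ after normalization) falls out immediately.

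The one place where your argument is asserted rather than proven is strictness. ``Deleting a point a.s.\ moves the least-squares estimate'' is not automatic for $p\geq 2$: one checks that $s_j = S-\tilde e_j^2$ holds if and only if $\mathbf z_j\tilde e_j=0$ (the delete-$j$ fit coincides with the full fit exactly when the deleted point has zero $\mathbf z$-part or is fitted exactly by the full fit), and nothing in general forbids individual points of this kind. You must derive the impossibility of \emph{all} points being of this kind from the a.s.-invertibility assumption: a.s.\ invertibility of $\mathbf X\trans\mathbf X$ forces every hyperplane through the origin, and also the set $\{\mathbf z=0\}$, to have probability zero under the design distribution (otherwise all $n$ training points would land on it with positive probability, making the Gram matrix singular); hence a.s.\ every $\mathbf z_j\neq 0$, and if in addition all $\tilde e_j$ were zero, then all $n+1$ points --- in particular the $n$ training points --- would lie on the common hyperplane $\{w=\mathbf z\trans\tilde\beta\}$ through the origin, again contradicting a.s.\ invertibility. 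So a.s.\ some $j$ has $s_j < S-\tilde e_j^2$, which is all the step $\sum_j s_j < nS$ needs (it needs only some $j$, not every $j$ as your phrasing suggests), and strictness then propagates through the AM--HM step since $S>0$ a.s. With this patch your proof is sound.
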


It follows by induction on $k$ that for random input data, AIC is
biased as an estimate of the extra-sample error except in a special
case with $k=1$. Also, the bias becomes worse for larger models. This
last fact is distressing, as it shows that when AIC assesses a
sequence of nested models, the amount by which it overestimates their
generalization ability grows with the model order. Thus the biases in
the AIC scores lead to a bias in the selected model order, which was
not evident from earlier work.

The XAIC formula \eqref{FAICempirical} contains two terms
that depend on the data: minus two times the log-likelihood, and the
penalty term $\kappa_{X'}$. The log-likelihood measures distances
between output values and is independent of $X'$, while $\kappa_{X'}$
expresses a property of input values and is largely unaffected by
output values; in fact, in linear models its computation does not
involve any (estimates based on) output values. Hence the variance of
XAIC is no greater than that of AIC when comparing the two on fixed $X, X'$, so that
XAIC's reduction in bias does not come at the price of an increase in
variance. However, focused model selection demands that $X'$ is
\emph{not} held fixed, so that FAIC may have a larger variance than AIC.
Similarly, if the distribution of $\mathbf X'$ is being estimated as
part of applying XAIC, the used estimator's quality will affect the
accuracy of the estimated generalization error.

\section{EXPERIMENTS}\label{sec:experiments}

We will now experimentally compare XAIC and FAIC (or more precisely,
their small-sample corrected versions $\XAICc$ and $\FAICc$) to several other model
selection methods, in univariate and multivariate problems. 

\subsection{DESCRIPTION OF EXPERIMENTS}

In the univariate experiments, linear models
${\cal M}_1, \ldots, {\cal M}_7$ with unknown variance were
considered. Model ${\cal M}_i$ contained polynomials of degree $i-1$
(and so had $i+1$ parameters). The input values $x$ of the training
data were drawn from a Gaussian distribution with mean 0 and variance
1, while the output values were generated as $\mathbf y_i = f(x_i) +
\mathbf z$ with $\mathbf z_i$ iid Gaussians with mean 0 and variance
$0.1$, and $f$ some unknown true function. Given 100 training data
points, each of the eight model selection methods under consideration
had to select a model. The squared risk $(\hat{y} -
f(x))^2$ of the chosen model's prediction $\hat{y}$ was computed for
each of a range of values of the test point's $x$,
averaged over 100 draws of the training data. This experiment was
performed for two different true functions: $f_1(x) = x + 2$ and
$f_2(x) = \lvert x \rvert$.

In the multivariate experiments, each input variable was a vector
$(u_1, \ldots, u_6)$, and the models corresponded to all possible
subsets of these 6 variables. Each model also included an intercept
and a variance parameter. The true function was given by $f(u) = 2 +
u_1 + 0.1u_2 + 0.03u_3 + 0.001u_4 + 0.003u_5$, and the additive noise
was again Gaussian with variance $0.1$. A set of $n' = 400$ test
inputs was drawn from a standard Gaussian distribution, but the
training inputs were generated differently in each experiment: from
the same Gaussian distribution as the test inputs; from a uniform
distribution on $[-\sqrt{3}, \sqrt{3}]^6$; or from a uniform
`spike-and-slab' mixture of two Gaussians with covariance matrices
$(1/5)I_6$ and $(9/5)I_6$. Note that all three distributions have the
same mean and covariance as the test input distribution, making these
mild cases of covariate shift. For the Gaussian training case, we
report the results for $n=60$ and, after extending the same training
set, for $n=100$. Squared risks were averaged over
the test set and further over 50 repeats of these experiments.

The experiments used the version of XAIC that is given a distribution
of the test inputs, but not the test inputs themselves. In the
multivariate experiments, XAIC used the actual (Gaussian) distribution
of the test inputs. In the univariate case, two instances of XAIC were
evaluated: one for test inputs drawn from the same
distribution as the training inputs (standard Gaussian), and another
(labelled $\XAICc$2) for a Gaussian test input distribution with mean 0 and
variance 4.

Bayesian model averaging (BMA) differs from the other methods in that
it does not select a single model, but formulates its prediction as a
weighted average over them; in our case, its prediction corresponds to
the posterior mean over all models. Weighted versions exist of other
model selection methods as well, such as Akaike weights \citep{Akaike1979, BurnhamAnderson} for AIC and variants. In our experiments we saw that
these usually perform similar to but somewhat better than their
originals. In our univariate experiments, we decided against reporting
these, as they are less standard. However, in the multivariate
experiments, the weighted versions were all better than their
selection counterparts, so both are reported separately to allow fair
comparisons.

In our experiments, BMA used a uniform prior over the models. Within
the models, Jeffreys' noninformative prior (for which the selected
$\mu$ would correspond to the maximum likelihood $\hat\mu$ used by
other methods) was used for the variable selection experiments; for
the polynomial case, it proved too numerically unstable for the larger
models, so there BMA uses a weakly informative Gaussian prior
(variance $10^2$ on $\mu_2, \ldots, \mu_7$ with respect to the
Hermite polynomial basis, and Jeffreys' prior on $\sigma^2$).

Of the model selection methods included in our experiments, AIC was
extensively discussed in Section~\ref{sec:prelim}; as with XAIC and
FAIC, we use here the small sample correction $\AICc$ (see
Section~\ref{sec:xaiclinmod}). BIC \citep{SchwarzBIC} and BMS were
mentioned in Section~\ref{sec:goals} as methods that attempt to find
the most probable model given the data rather than aiming to optimize
predictive performance; both are based on BMA, which computes the
Bayesian posterior probability of each model. Three other methods were
evaluated in our experiments; these are discussed below.

Like AIC, the much more recent focused information criterion (FIC)
\citep{FIC} is designed to %
make good predictions. Unlike other methods,
these predictions are for a \emph{focus parameter} which may be any
function of the model's estimate, %
not just its prediction at some input value (though we only used the
latter in our experiments). Unlike FAIC, it uses this focus not just
for estimating a model's variance, but also its bias; FAIC on the
other hand uses a global estimate of a model's bias based on
Assumption~\ref{ass:FAIC}. A model's bias for the focus parameter is
evaluated by comparing its estimate to that of the most complex model
available.

Another more recent method for model selection is the
subspace information criterion (SIC) \citep{SIC}, which is applicable
to supervised learning problems when our models are subspaces of some
Hilbert space of functions, and our objective is to minimize the
squared norm. Like FIC, SIC estimates the models' biases by comparing
their estimates to that of a larger model, but it includes a term to
correct for this large model's variance. In our experiments, we used
the corrected SIC (cSIC) which truncates the bias estimate at 0.

Generalized cross-validation (GCV) \citep{GCV} can be seen as a
computationally efficient approximation of leave-one-out
cross-validation for linear models. We included it in our experiments
because \citet{LeebOutOfSample} shows that it performs better than
other model selection methods when the test input variables are newly
sampled.

\subsection{RESULTS}

\begin{figure}[t] %
\centering
\includegraphics[width=3.25in]{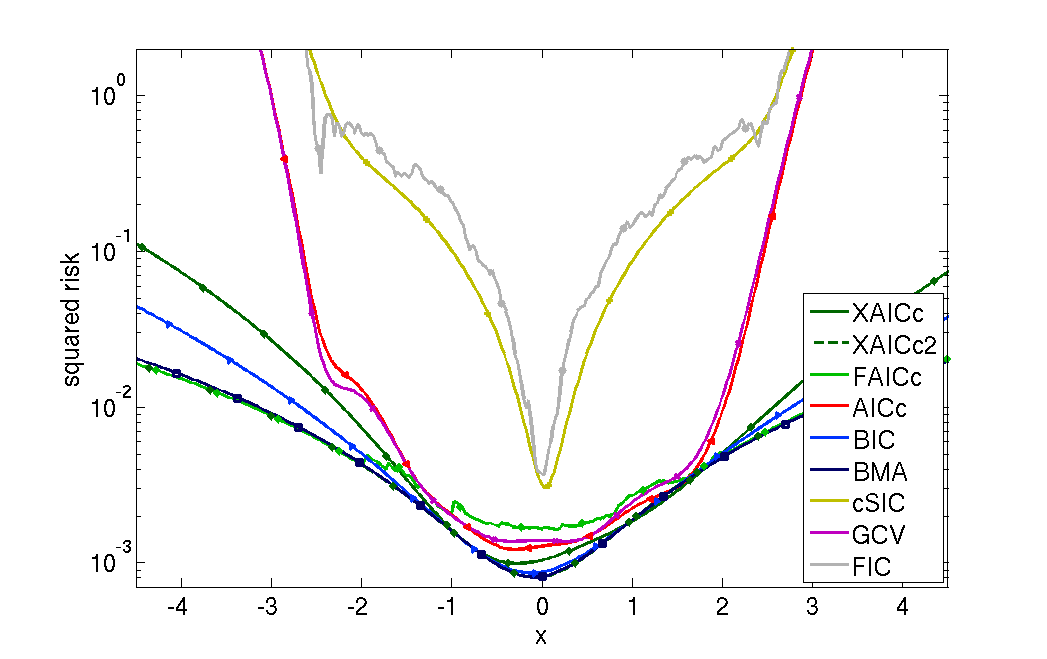}
\caption{Squared risk of different model selection methods
  as a function of $x$ when the true function is $f_1(x) = x +
  2$.}\label{fig:risk_line}
\end{figure}
\begin{figure}[t]
\centering
\includegraphics[width=3.25in]{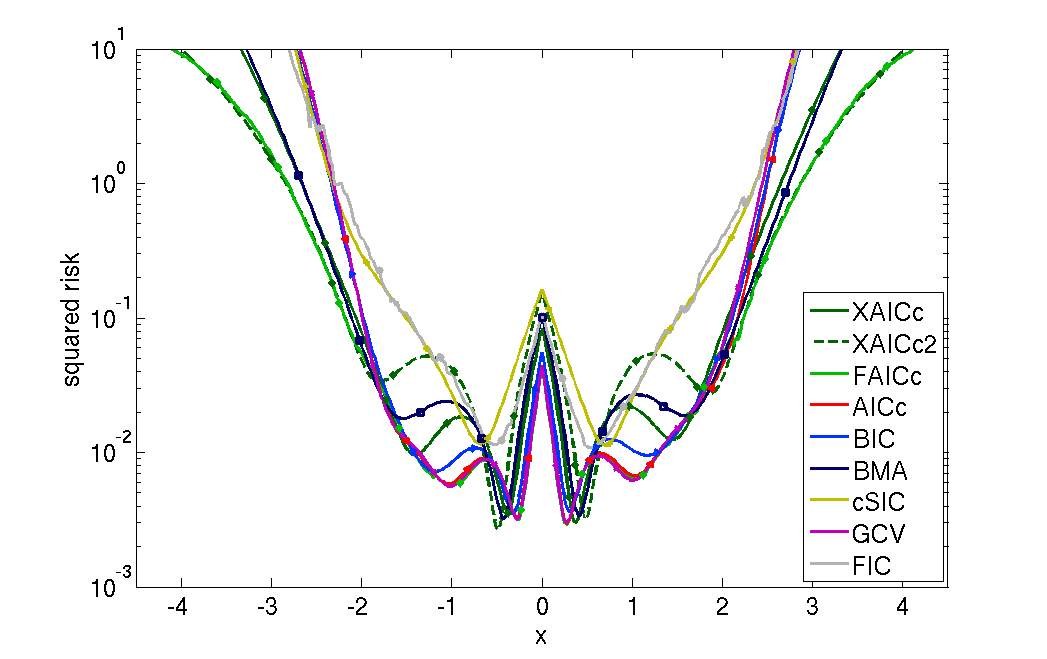}
\caption{Squared risk of different model selection methods as
  a function of $x$ when the true function is $f_2(x) = \lvert x
  \rvert$.}\label{fig:risk_abs}
\end{figure}
\begin{table}[t]
\caption{Average selected model index per method for $f_1$ and $f_2$,
  at test inputs $x'=0$ and $4$ (if different).}\label{tab:selections}
\begin{center}
\setlength{\tabcolsep}{3pt}
\begin{tabular}{r|ccccccc|}
&$\XAICc$&$\XAICc$2&$\AICc$&
 BIC &   BMS &  cSIC &   GCV \\
\hline
$f_1$ &    2.10 &  2.00 & 2.33 & 2.02 & 2.00 & 2.94 & 2.38 \\ %
$f_2$ &    4.57 &  3.00 & 6.38 & 5.70 & 4.05 & 4.70 & 6.49 \\ %
\hline
\end{tabular}

\vspace*{2mm}

\begin{tabular}{r|cc|cc|}
& \multicolumn{2}{c}{$\FAICc$}& \multicolumn{2}{|c|}{FIC} \\
& $x'=0$ & $x'=4$ & $x'=0$ & $x'=4$ \\
\hline
$f_1$&    2.94 &  2.00 & 2.66 & 3.12 \\ %
$f_2$&    6.56 &  1.54 &  5.29 &  5.35 \\ %
\hline
\end{tabular}
\end{center}
\end{table}
\begin{table}[t]
\caption{Multivariate: squared risk for different training
  sets; model selection}\label{tab:multivarselection}
\begin{center}
\setlength{\tabcolsep}{3pt}
\begin{tabular}{l|cccc|}
        &          &         & spike-     &       \\
        & Gaussian & uniform & and-slab & Gaussian \\
        & ($n=60$) & ($n=60$) & ($n=60$) & ($n=100$) \\
\hline
$\XAICc$ & 0.0119 & 0.0123 & 0.0144 & 0.0070 \\
$\FAICc$ & 0.0123 & 0.0127 & 0.0133 & 0.0077 \\
$\AICc$  & 0.0125 & 0.0126 & 0.0156 & 0.0070 \\
BIC      & 0.0113 & 0.0128 & 0.0140 & 0.0073 \\
BMS      & 0.0120 & 0.0126 & 0.0138 & 0.0075 \\
cSIC     & 0.0119 & 0.0134 & 0.0138 & 0.0074 \\
GCV      & 0.0129 & 0.0131 & 0.0153 & 0.0072 \\
FIC      & 0.0196 & 0.0189 & 0.0241 & 0.0111 \\
\hline
\end{tabular}
\end{center}
\end{table}
\begin{table}[t]
\caption{Multivariate: squared risk for different training
  sets; model weighting/averaging}\label{tab:multivarweighting}
\begin{center}
\setlength{\tabcolsep}{3pt}
\begin{tabular}{l|cccc|}
        &          &         & spike-     &       \\
        & Gaussian & uniform & and-slab & Gaussian \\
        & ($n=60$) & ($n=60$) & ($n=60$) & ($n=100$) \\
\hline
$\XAICc$w & 0.0099 & 0.0108 & 0.0114 & 0.0063 \\
$\FAICc$w & 0.0100 & 0.0110 & 0.0110 & 0.0066 \\
$\AICc$w  & 0.0101 & 0.0108 & 0.0119 & 0.0063 \\
BICw      & 0.0096 & 0.0106 & 0.0111 & 0.0062 \\
BMA      & 0.0100 & 0.0107 & 0.0113 & 0.0061 \\
\hline
\end{tabular}
\end{center}
\end{table}
Results from the two univariate experiments are shown in
Figures~\ref{fig:risk_line} and \ref{fig:risk_abs} (squared risks) and
in Table~\ref{tab:selections} (selected models). Squared risk results
for the multivariate experiments are given in
Table~\ref{tab:multivarselection} for the model selection methods, and
in Table~\ref{tab:multivarweighting} for the model weighting/averaging
variants.

\paragraph{XAIC and FAIC} The characteristic behaviours
of our methods are clearly visible in the univariate experiments. Both
instances of XAIC perform well overall in both experiment. Of the two,
$\XAICc$2 was set up to expect test inputs further away from the
center. As a result, it selects models more conservatively, and
obtains smaller risk at such off-center test inputs. Its selections
were very stable: in both experiments, $\XAICc$2 selected the same
model in each of the 100 runs.

We see that in the center of Figure~\ref{fig:risk_abs}, the simple
model chosen by $\XAICc$2 was outperformed by more complex
models. FAIC exploits this by choosing a model adaptively for each
test input. This resulted in good risk performance at all test inputs.

In the multivariate experiments, FAIC was the best method for the
spike-and-slab training data, where there are pronounced differences
in training point density surrounding different test points, so that
selecting a different model for each pays off. The performance of
XAIC was more reliable overall, comparing very favourably to each of
its competitors.

\paragraph{AIC} %
Our methods XAIC and FAIC were derived as adaptations of AIC, and share its
tendency to go for complex models as soon as there is some indication
that their predictions might be worthwhile. This leads to good
predictions on average, but also to inconsistency: when a simpler
model contains the true distribution, AIC will continue to select more
complex models with positive probability, no matter how large $n$
grows. This may sometimes hurt predictive performance, because the
accuracy of the estimated parameter will be smaller for more complex
models; for details, we refer to \citet{Yang_simple,switchNIPS,switchJRSS}.
XAIC makes a better assessment of the generalization error, even when
the training and test inputs follow the same distribution, so that it
overfits less than AIC and may achieve much better risks.
FAIC differs from AIC in another way:
its tendency to choose more complex models is strengthened in areas where
many data points are available (so that the potential damage of
picking an overly complex model is smaller), while it is suppressed
when few data points are available (and the potential damage is much
greater).

This tendency is also apparent in Table~\ref{tab:selections}. In the
first experiment, where a small model contains the true distribution,
it causes FAIC to perform worse than AIC near $x=0$. However, note
that the vertical axis is logarithmic, so the difference appears
larger than it is: when we average over the training input
distribution, we find that FAIC performs better by a factor 20 in
terms of squared risk.

In the multivariate experiments, XAIC again performs better than AIC,
though the difference eventually disappears as $n$ grows. With the
notable exception of the spike-and-slab experiment, FAIC does not
perform well here: in two of the experiments, it does worse than
AIC. Part of the reason must be our observation at the end of
Section~\ref{sec:kappaBehaviour}: FAIC's estimate of the
generalization error, while unbiased, may potentially have a larger
variance than (X)AIC's estimate, and this is not always a good trade-off.

\paragraph{BIC and BMS/BMA} BIC and BMS do not try to identify the
model that will give the best predictions now, but instead attempt to
find the most probable model given the data, which usually amounts to
the simplest model containing the true distribution.
This leads them to be conservative about selecting complex models. For
similar reasons, Bayesian model averaging (BMA) puts only small weight
on complex models.  We see this in Figure~\ref{fig:risk_line}, where
BIC and BMA have good performance because they most often select the
optimal second model (or in the case of BMA, give it the largest
weight). However, for $f_2$ in Figure~\ref{fig:risk_abs}, it may be
outperformed by FAIC or XAIC for test inputs away from the center. In
the multivariate experiments, XAIC often performs better than BMS/BMA,
and rarely much worse; the only instance of the latter is for the
spike-and-slab data, where FAIC outperforms both. (See
Section~\ref{sec:bayespred} for further discussion of BMA.)

\paragraph{FIC} In all our experiments, FIC obtained large squared risks, and we see in Table~\ref{tab:selections} that its
selection behaviour was the opposite of FAIC: for extreme $x$, FIC
often selects a more complex model than near $x=0$.  This seems to
happen because FIC uses the most complex model's prediction at a given
$x$ to estimate each other model's bias.  Because the most complex
model will usually have a significant variance, this resulted in FIC
being misled in many of the experiments we examined. In particular, in
areas with few training inputs, FIC apparently usually believes the simpler
models will perform badly because it attributes to them a large bias,
so that the same model as elsewhere (or even a more complex one) is
selected. Conversely, FIC was often observed to switch to an overly
simple model near some input value where this model's estimate
happened to coincide with that of the most complex model.

\paragraph{SIC} SIC obtained large risks in the univariate experiments due to underfitting. Its results in three of the four multivariate experiments were competitive, however.
\paragraph{GCV} Based on \citet{LeebOutOfSample}, we expected GCV might be one of the strongest competitors to XAIC. This was not
clearly reflected in our experiments, where its performance was very
similar to that of AIC.

\section{DISCUSSION}\label{sec:discussion}

\subsection{RELATION TO THE BAYESIAN PREDICTIVE DISTRIBUTION}\label{sec:bayespred}

The quantity $\kappa_{x'}$ that occurs in FAIC has an interpretation
in the Bayesian framework. If we do linear regression with known
variance and a noninformative prior on $\mu$, then after observing
$X$, $\mathbf Y$ and $x'$, the predictive distribution of $\mathbf y'$ is
$\mathbf y' \mid \mathbf Y, X, x' \sim {\cal N}({x'}\trans \hat\mu,
\sigma^2(1 + {x'}\trans (X\trans X)^{-1} x'))$. We see that
$\kappa_{x'}$ and the variance of this predictive distribution obey a
linear relation.
Thus if BMA is allowed to give a distribution over output values as
its prediction, then this distribution (a mixture of Gaussians with
different variances) will reflect that some models'
predictions are more reliable than others. %
However, if the predictive distribution must be summarized by a point
prediction, then such information is likely to be lost. For instance,
if the point prediction $\hat{y}'$ is to be evaluated with squared
loss and $\hat{y}'$ is chosen to minimize the expected loss under the
predictive distribution (as in our experiments in Section~\ref{sec:experiments}), then $\hat{y}'$ is a
weighted average of posterior means for $\mathbf y'$ given $x'$ (one
mean for each model, weighted by its posterior probability). The
predictive variances are not factored into $\hat{y}'$, so that in
this scenario, BMA does not use the information captured by $\kappa_{X'}$
that XAIC and FAIC rely on.

This is not to say that BMA \emph{should} use this information: the
consideration of finding the most probable model (BMS, BIC) or the
full distribution over models (BMA) is not affected by the purpose for
which the model will be used, so it should not depend on the input
values in the test data through $\kappa_{X'}$. This suggests that
there is no XBIC analogue to XAIC. For Bayesian methods such as DIC
\citep{DIC} and BPIC \citep{Ando2007} that aim for good predictions,
on the other hand, extra-sample and focused equivalents may exist.

\subsection{RELATION TO COVARIATE SHIFT} %

We observed at the end of Section~\ref{sec:kappaBehaviour} that of the two
data-dependent terms in XAIC, the log-likelihood is independent of
$X'$, while $\kappa_{X'}$ is (largely) unaffected by output values. An
important practical consequence of this split between input and output
values is that XAIC and FAIC look for models that give a good overall
fit, not just a good fit at the test inputs. $X'$ is then used to
determine how well we can expect these models to generalize to the
test set.
So if we have two models and
believe each to be able to give a good fit in a different region of
the input space, then FAIC is not the proper tool for the task of
finding these regions: FAIC considers global fit rather than local fit
when evaluating a model, and within the model selects the maximum
likelihood estimator, not an estimator specifically chosen for a local
fit at input point $x$.

In this respect, our methods differ from those commonly used in the
covariate shift literature (see \citet{SugiyamaCovShiftBook,TLSurvey};
some negative results are in \citet{DomAdaptImposs}), where typically
a model (and an estimator within that model) is sought that will
perform well on the test set only, using for example importance
weighting. This is appropriate if we believe that no available model
can give satisfactory results on both training and test inputs
simultaneously. In situations where such models are believed to exist,
our methods try to find them using all information in the training
set.

\section{CONCLUSIONS AND FUTURE WORK}\label{sec:conclusion}

We have shown a bias in AIC when it is applied to supervised learning
problems, and proposed XAIC and FAIC as versions of AIC which correct
this bias. We have experimentally shown that these methods give better
predictive performance than other
methods in many situations.%

We see several directions for future work. First, the practical
usefulness of our methods needs to be confirmed by further experiments.
Other future work includes considering other model selection methods:
determining whether they are affected by the same bias that we found
for AIC, whether such a bias can be removed (possibly leading to
extra-sample and focused versions of those methods), and how these
methods perform in simulation experiments and on real data.
In particular, BPIC \citep{Ando2007} is a promising candidate, as its
derivation starts with a Bayesian equivalent of \eqref{AICtarget}. An
XBPIC method would also be better able to deal with more complex models that
a variant of AIC would have difficulty with, such as hierarchical
Bayesian models, greatly increasing its practical applicability.

\subsubsection*{Acknowledgements}

I thank Peter Gr\"{u}nwald and Steven de Rooij for their valuable
comments and encouragement.

\newpage

\bibliographystyle{plainnat}
\DeclareRobustCommand{\VAN}[3]{#3}
\bibliography{bib}{}

\iftoggle{supplementary}{
\newpage

\section*{SUPPLEMENTARY MATERIAL}

\begin{assumption}[Regularity conditions]\label{ass:TIC}
  Items 1--4 correspond to the regularity assumptions behind AIC given
  by \citet{Shibata89}, but rewritten to take the input variables into
  account. Item 5 is the assumption of asymptotic normality of the
  maximum likelihood estimator, which is also standard.
  \begin{enumerate}
  \item $\Theta \subseteq \R^k$ is open,
    and for sufficiently large $n$ the gradient and Hessian of the
    log-likelihood function $\ell(\theta) = \log g(\mathbf Y \mid X, \theta)$
    are well-defined for all $\theta \in \Theta$ with probability 1,
    and both are continuous;

  \item For sufficiently large $n$, $\E_{\mathbf Y \mid X}
    |\frac{\partial}{\partial\theta}\ell(\theta)| < \infty$ and
    $\E_{\mathbf Y \mid X}
    |\frac{\partial^2}{\partial\theta^2}\ell(\theta)| < \infty$;

  \item For sufficiently large $n$, there exists a unique $\theta\best
    \in \Theta$ such that $\E_{\mathbf Y \mid X}
    \frac{\partial}{\partial\theta}\ell(\theta\best) = 0$. For all
    $\epsilon > 0$, it satisfies
  \begin{equation*}
    \inf_{\theta: \lVert\theta - \theta\best\rVert>\epsilon} \ell(\theta\best) -
    \ell(\theta) \to \infty \qquad\text{almost surely}
  \end{equation*}
  as $n \to \infty$;

  \item For all $\epsilon > 0$, there is a $\delta > 0$ such that for
    sufficiently large $n$,
    \begin{multline*}
      \scriptstyle\sup_{\lVert \theta - \besttheta \rVert < \delta} \big\lvert 
       \E_{\mathbf Y \mid X} [\hat\theta(\mathbf Y \mid X) - \besttheta]\trans
      I(\theta \mid X)
[\hat\theta(\mathbf Y \mid X) - \besttheta] \\
      \scriptstyle - \trace\left[J(\besttheta \mid X) I(\besttheta \mid X)^{-1} \right] \big\rvert
      < \epsilon, %
    \end{multline*}
    where $I(\theta \mid X) = -\E_{\mathbf Y \mid X}\frac{\partial^2}{\partial\theta^2} \ell(\theta)$ and $J(\besttheta \mid X) = \E_{\mathbf Y \mid X} \left[
      \frac{\partial}{\partial\theta}\ell(\theta\best) \right] \left[
      \frac{\partial}{\partial\theta}\ell(\theta\best) \right]\trans$
    are continuous and positive definite.

  \item $\sqrt{n} (\hat\theta(\mathbf Y \mid X) - \besttheta)
    \xrightarrow{D} {\cal N}(0, \Sigma)$ for some $\Sigma$.
  \end{enumerate}
\end{assumption}

\begin{proof}[Proof of Theorem \ref{thm:FAIC}]
  This proof is adapted from the one in \citet{BurnhamAnderson}, with
  modifications to take $X$ and $X'$ into account. Derivation of an
  estimator for \eqref{conditionalAICextrasample} starts with a Taylor
  expansion:
\begin{multline*}
-2 \log g(\mathbf Y' \mid X', \hat\theta(X, \mathbf Y))
= -2 \log g(\mathbf Y' \mid X', \besttheta)\\
-2\left[\frac{\partial}{\partial\theta} \log g(\mathbf Y' \mid X',
  \besttheta)\right]\trans
  [\hat\theta(X, \mathbf Y) - \besttheta]\\
-[\hat\theta(X, \mathbf Y) - \besttheta]\trans
  \left[\frac{\partial^2}{\partial\theta^2} \log g(\mathbf Y' \mid X',
  \besttheta)\right]
  [\hat\theta(X, \mathbf Y) - \besttheta]\\
+r(\hat\theta),
\end{multline*}
where $r(\hat\theta)/\lVert \hat\theta(X, \mathbf Y) - \besttheta
\rVert^2 \to 0$ as $\hat\theta(X, \mathbf Y) \to \besttheta$. We take
the expectation $\E_{\mathbf Y' \mid X'}$; given the regularity
conditions on the model, $\besttheta$ minimizes $\E_{\mathbf Y' \mid
  X'} \log g(\mathbf Y' \mid X', \theta)$, so the linear term
vanishes. (Note that we need this vanishing to hold for any $X'$ (or
equivalently, for any single point $x$); this follows from the
assumption that $\besttheta$ represents the true conditional
data-generating distribution.) The coefficient of the quadratic term
now becomes the conditional Fisher information at $\besttheta$ given
$X'$, so we have
\begin{multline*}
-2\E_{\mathbf Y' \mid X'}\log g(\mathbf Y' \mid X', \hat\theta(X, \mathbf Y))\\
= -2\E_{\mathbf Y' \mid X'}\log g(\mathbf Y' \mid X', \besttheta)\\
+ [\hat\theta(X, \mathbf Y) - \besttheta]\trans
  I(\besttheta \mid X')
  [\hat\theta(X, \mathbf Y) - \besttheta]
+r(\hat\theta).
\end{multline*}
Rearranging the quadratic term and taking the other expectation, we
obtain
\begin{multline}\label{eq:resStepI}
-2\E_{\mathbf Y \mid X}\E_{\mathbf Y' \mid X'}
  \log g(\mathbf Y' \mid X', \hat\theta(X, \mathbf Y))\\
=  -2\E_{\mathbf Y' \mid X'}\log g(\mathbf Y' \mid X', \besttheta)\\
+ \trace\left\{I(\besttheta \mid X')
\left[\E_{\mathbf Y \mid X}[\hat\theta(X, \mathbf Y) -
  \besttheta][\hat\theta(X, \mathbf Y) -
  \besttheta]\trans\right]\right\}\\
+\E_{\mathbf Y \mid X} r(\hat\theta).
\end{multline}
The other matrix in the trace is the conditional covariance matrix of
$\hat\theta(X, \mathbf Y)$.

To proceed with the first term on the right hand side, we use
Assumption~\ref{ass:FAIC}. Then we have
\begin{multline*}
-2\frac{n}{n'}\E_{\mathbf Y' \mid X'}\log g(\mathbf Y' \mid X', \besttheta)\\
= -2\E_{\mathbf Y \mid X}\log g(\mathbf Y \mid X, \besttheta)
\end{multline*}
for a sample $(X, \mathbf Y)$ of size $n$. (Here $X$ still represents
the values of the input variable in the training set, but $\mathbf Y$
conceptually represents a new sample.) Now only one $X$ remains, so
the rest of the derivation corresponds to that of standard AIC, which
gives us
\begin{multline}\label{eq:resStepII}
-2\E_{\mathbf Y \mid X}\log g(\mathbf Y \mid X, \besttheta)\\
= -2\E_{\mathbf Y \mid X}
  \log g(\mathbf Y \mid X, \hat\theta(X, \mathbf Y))
+ k + o(1).
\end{multline}
Multiplying \eqref{resStepI} by $n/n'$ and plugging in the above, we get
\begin{multline*}%
  -2\frac{n}{n'}\E_{\mathbf Y \mid X} \E_{\mathbf Y' \mid X'} \log
  g(\mathbf Y' \mid X', \hat\theta(X, \mathbf Y)) \\
= -2\E_{\mathbf Y \mid X}
  \log g(\mathbf Y \mid X, \hat\theta(X, \mathbf Y)) + k\\
  + \frac{n}{n'}\trace\left\{I(\besttheta \mid X')
    \Cov(\hat\theta(X, \mathbf Y) \mid X)\right\} \\
  + \E_{\mathbf Y \mid X} \frac{n}{n'}r(\hat\theta)
  + o(1).
\end{multline*}
The term with the trace is what we called $\kappa_{X'}$.

By the assumed asymptotic normality of the maximum likelihood
estimator, $\E_{\mathbf Y \mid X} n \lVert \hat\theta(X, \mathbf Y) -
\besttheta \rVert^2$ converges to a constant, so that the first
remainder term $\E_{\mathbf Y \mid X} (n/n')r(\hat\theta) =
(1/n')o(1)$; because we additionally assumed the test set is either
fixed or grows with the training set, this is again $o(1)$. This proves
\eqref{FAICthm}.

In the case of a linear model with fixed variance $\sigma^2$, the
second-order Taylor approximation and the approximation in
\eqref{resStepII} are actually exact.
\end{proof}

\begin{proof}[Proof of Theorem \ref{thm:FAICc}]
  This proof will follow a different path than the one above. It is
  adapted from the derivation of $\AICc$ in \citet[section
  7.4.1]{BurnhamAnderson}. We first consider the case where the
  training set size $n' = 1$. Then $X'$ becomes a vector (we choose to
  make it a column vector) and $\mathbf Y'$ a scalar; we write $x$ and
  $\mathbf y$ for these. Hats denote maximum likelihood estimates. For
  Gaussian densities, we get
\begin{multline*}
  T = -2\E_{\mathbf Y \mid X} \E_{\mathbf y \mid x}
  \log g(\mathbf y \mid x, \hat\theta(X, \mathbf Y))\\
  = \E_{\mathbf Y \mid X} \E_{\mathbf y \mid x} \bigg[ \log
    2\pi\hat\sigma^2(X, \mathbf Y)\\
    + \frac{1}{\hat\sigma^2(X, \mathbf
      Y)}\left(\mathbf y - x\trans
      \hat\mu(X, \mathbf Y)\right)^2 \bigg] \\
  = \E_{\mathbf Y \mid X} \log 2\pi\hat\sigma^2(X, \mathbf Y)\\
  + \E_{\mathbf Y \mid X} \frac{1}{\hat\sigma^2(X, \mathbf Y)}
  \E_{\mathbf y \mid x} \left(\mathbf y - x\trans \hat\mu(X, \mathbf
    Y)\right)^2.
\end{multline*}
We will call the final term $T'$. Writing $y\best$ for
$\E_{\mathbf y \mid x} y$ and $\sigma^2\best$ for $\mathbf y$'s
unknown variance, the inner expectation becomes
\begin{multline*}
  \E_{\mathbf y \mid x} \left(\mathbf y - x\trans \hat\mu(X, \mathbf Y)\right)^2 \\
  = \E_{\mathbf y \mid x} (\mathbf y - y\best)^2 + 2\left(y\best -
    x\trans \hat\mu(X, \mathbf Y)\right)\E_{\mathbf y \mid x}(\mathbf
  y - y\best)\\
  + \left(y\best - x\trans
    \hat\mu(X, \mathbf Y)\right)^2 \\
  = \sigma^2\best + x\trans(\mu\best-\hat\mu(X, \mathbf
  Y))(\mu\best-\hat\mu(X, \mathbf Y))\trans x.
\end{multline*}
Using the fact that $\hat\mu(X, \mathbf Y)$ and $\hat\sigma^2(X,
\mathbf Y)$ are independent in this
setting, %
\begin{multline*}
  T' = \left[\E_{\mathbf Y \mid X} \frac{1}{\hat\sigma^2(X, \mathbf
      Y)}\right] \\
  \cdot \left[ \sigma^2\best + x\trans \Cov(\hat\mu(X, \mathbf Y)
\mid X) x \right].
\end{multline*}
The covariance matrix equals $\sigma^2\best (X\trans X)^{-1}$. Then we
use that $n\hat\sigma^2/\sigma^2\best$ follows a chi-squared
distribution with $n-k+1$ degrees of freedom ($k$ is the number of
free parameters in the model, which includes $\sigma^2$), and that $\E
1/\chi^2_{n-k} = 1/(n-k-1)$:
\begin{multline*}
  T' = 
  \left[\E_{\mathbf Y \mid X} \frac{1}{\hat\sigma^2(X, \mathbf
      Y)}\right] \left[ \sigma^2\best
    + \sigma^2\best x\trans (X\trans X)^{-1} x \right] \\
  = 
  \left[\E_{\mathbf Y \mid X} \frac{\sigma^2\best}{n\hat\sigma^2(X,
      \mathbf Y)}\right] \left[ n
    + n x\trans (X\trans X)^{-1} x \right] \\
  = 
    \frac{n + n x\trans (X\trans X)^{-1} x}{n-k-1} \\
  = 1
  + \frac{n + n x\trans (X\trans X)^{-1} x - (n-k-1)}{n-k-1} \\
  = 1 +
  \frac{k + \kappa_x}{n-k-1},
\end{multline*}
where $\kappa_x = n x\trans (X\trans X)^{-1} x + 1$. The reason for
splitting off the $1$ from the fraction is that $n(\E_{\mathbf Y \mid
  X} \log2\pi\hat\sigma^2(X, \mathbf Y) + 1)$ is $-2$ times the maximized
log-likelihood. Then we multiply by $n$ and get the result in the
stated form:
\begin{multline*}
nT = -2\E_{\mathbf Y \mid X} \log g(\mathbf Y \mid X, \hat\theta(X, \mathbf Y))
    + \frac{n(k + \kappa_x)}{n-k-1} \\
 = -2\E_{\mathbf Y \mid X} \log g(\mathbf Y \mid X, \hat\theta(X, \mathbf Y))\\
    + k + \kappa_x + \frac{(k + 1)(k + \kappa_x)}{n-k-1}.
\end{multline*}

The result for $n' > 1$ now follows by taking the average over all
sample points in the test set on both sides.
\end{proof}

\begin{proof}[Proof of Proposition~\ref{prop:bias}]
  Assume without loss of generality that the variance is known (as its
  inclusion does not affect the statement of the theorem) and that the
  basis is orthonormal with respect to the measure underlying $\mathbf
  x$ (that is, that $\E_{\mathbf x} \mathbf x \mathbf x\trans = I_k$).
  Then
\begin{align*}
\E_{\mathbf x} \kappa_{\mathbf x} & = n \E_{\mathbf x} \mathbf x\trans (\mathbf
X\trans \mathbf X)^{-1} \mathbf x \\
  & = n \trace(\mathbf X\trans \mathbf X)^{-1}
  = \trace(\frac{1}{n} \mathbf X\trans \mathbf X)^{-1},
\end{align*}
where orthonormality was used in the second equality. To compare the
$\kappa_{\mathbf x}$ for this model with that of a submodel with one
fewer entry in its design vectors, write
\[
\frac{1}{n} \mathbf X\trans \mathbf X
= \begin{bmatrix}\mathbf A & \mathbf v \\ \mathbf v\trans & \mathbf d\end{bmatrix}.
\]
Note that by orthonormality, the expected value of this matrix is the
identity matrix. We require that its inverse exists. Then for
$\mathbf d'=(\mathbf d-\mathbf v\trans \mathbf A^{-1}\mathbf v)^{-1}$,
\begin{align*}
\E \kappa_{\mathbf x}
& = \E\trace\begin{bmatrix}\mathbf A & \mathbf v \\ \mathbf v\trans & \mathbf d\end{bmatrix}^{-1} \\
& = \E\trace\begin{bmatrix}
    \mathbf A^{-1}+\mathbf A^{-1} \mathbf v \mathbf d' \mathbf v\trans \mathbf A^{-1}
    & -\mathbf A^{-1} \mathbf v \mathbf d' \\
    -\mathbf d' \mathbf v\trans \mathbf A^{-1}
    & \mathbf d'
  \end{bmatrix} \\
& = \E\trace \mathbf A^{-1} + \E\frac{\mathbf v\trans \mathbf A^{-2} \mathbf v + 1}{\mathbf d-\mathbf v\trans \mathbf A^{-1}\mathbf v} \\
& \geq \E\trace \mathbf A^{-1} + \E\frac{1}{\mathbf d-\mathbf v\trans
  \mathbf A^{-1} \mathbf v} \\
& \geq \E\trace \mathbf A^{-1} + \frac{1}{1 - \E\mathbf  v\trans
  \mathbf A^{-1} \mathbf v} \\
& \geq \E\trace \mathbf A^{-1} + 1.
\end{align*}
This shows that adding an element to the design vector increases $\E
\kappa_{\mathbf x}$ by at least one. For $k=1$ (so that $\mathbf A$ is
a $0 \times 0$ matrix), we have equality if and only if $\mathbf d=1$
almost surely, which means that for ${\mathbf x}_1$ (the first
and only entry of design vector $\mathbf x$), we must have ${\mathbf
  x}_1 = \pm 1$ almost surely.
For $k \geq 2$, because
$\mathbf A^{-1}$ is positive definite, equality requires that $\mathbf
v$ is the zero vector almost surely (in addition to the same
requirement as above on all ${\mathbf x}_i$). But this can only be
satisfied if ${\mathbf x}_i {\mathbf x}_k = 0$ almost surely for all
$i<k$, which is incompatible with the conditions on ${\mathbf x}_1$
and ${\mathbf x}_k$.
\end{proof}
}{}

\end{document}